\documentclass[journal]{IEEEtran}

\ifCLASSINFOpdf
\else
\fi
\usepackage{amsfonts}
\usepackage{graphicx}
\usepackage{tabularx} 
\usepackage[english]{babel}

\usepackage{booktabs}   
\usepackage{siunitx}    
\usepackage{adjustbox}   
\usepackage{booktabs,tabularx}
\usepackage{adjustbox}
\usepackage{soul}
\usepackage[minimal,theorems]{textool}
\usepackage{cite}
\usepackage{xurl}        
\usepackage{soul}
\usepackage{xcolor}
\sethlcolor{red}

\soulregister\cite7
\soulregister\ref7
\soulregister\eqref7
\soulregister\footnote7
\usepackage{subcaption}
\usepackage{placeins}
\usepackage{tabularx,booktabs,array}
\usepackage{amssymb}
\usepackage{hyperref}
\begin{document}
\setlength{\abovecaptionskip}{2pt}
\setlength{\belowcaptionskip}{0pt}
\title{Human-in-the-Loop Distributed Control of Grid-Interactive Buildings for Demand Response Participation}
%
%

\author{Kasra~Mazarei~Saadabadi,~\IEEEmembership{Student Member, IEEE},
Dongming~Wang,~\IEEEmembership{Student Member, IEEE},
Wei~Ren,~\IEEEmembership{Fellow, IEEE},
Alfredo~Martinez-Morales,~\IEEEmembership{Member, IEEE},
and~Hamidreza~Nazaripouya,~\IEEEmembership{Senior Member, IEEE}%
\thanks{K. Mazarei Saadabadi, D. Wang, W. Ren, and A. Martinez-Morales are with the Department of Electrical Engineering, University of California Riverside, Riverside, CA 92521 USA.}
\thanks{H. Nazaripouya is with the School of Electrical and Computer Engineering, Oklahoma State University, Stillwater, OK 74078 USA.}
\thanks{Corresponding author: H. Nazaripouya (e-mail: hanazar@okstate.edu).}}
%



\setcounter{secnumdepth}{4}

\maketitle
\begin{abstract}
\sloppy
This paper proposes a human-in-the-loop distributed consensus control approach for demand-side management across multiple buildings. Specifically, a novel framework is introduced in which a human acts as the non-autonomous leader in consensus control of cooperative buildings participating in demand response programs. In this system, the facility manager in the facility building serves as the leader, determining the participation level of cooperative buildings in demand response events while simultaneously considering occupants' comfort. Cooperative buildings align their responses with the facility management building, despite lacking direct access to the facility manager’s decisions, which presents a challenge for observer design. To address this, a nonlinear unknown input sliding-mode observer is proposed, tailored for leader-follower multi-agent systems (MASs). Furthermore, a human-in-the-loop leader-follower consensus protocol is introduced, enabling a framework to flexibly manage energy use and balance thermal comfort during demand response events. Simulation results validate the effectiveness of the proposed approach, demonstrating its ability to achieve consensus, maintain system performance, and enhance the adaptability of power grid operations under various demand response scenarios.\end{abstract}
\vspace{-0.05in}
\begin{IEEEkeywords}
Human-in-the-loop, Demand Response, Comfort management, Cooperative Buildings, Consensus Control, Leader-Follower Multi-agent System, Sliding-mode Observer.
\end{IEEEkeywords}

\IEEEpeerreviewmaketitle
\vspace{-0.1in}
\section{Introduction}
\vspace{-0.05in}
\sethlcolor{blue}
\colorlet{blue}{black}
\IEEEPARstart{P}{ower} systems are increasingly recognized as cyber-physical systems due to their integration of  both digital and physical components. However, beyond these cyber and physical aspects,  humans play a crucial role in the operation and success of these systems, primarily as operators and end users. With the advent of smart grids, which facilitate a more interactive environment between the system and stakeholders, humans have transitioned from passive observers to empowered, active participants. For example, in smart grids, end-users can actively manage their  energy generation and consumption, thereby becoming  prosumers, entities that both produce and consume energy \cite{annaswamy2021cyber}. Within a smart grid framework, end-users have access to various energy management tools, such as demand response (DR) and load-shifting strategies, enabling them to participate in real-time demand-side management \cite{6099519}. 

Numerous studies have examined DR as a key enabler of interactive energy management in modern power systems \cite{6525433}. Prior work has studied DR for peak-load reduction, grid support, and energy-efficiency improvement \cite{SIANO2014461, 5454394, 5930335, 5607339}. While much of the early research has focused on individual-level DR participation \cite{5540263}, recent studies have highlighted the advantages of aggregating loads across multiple agents for more efficient coordination and enhanced flexibility \cite{10443437, 7525271}. This growing body of work sets the foundation for more advanced DR frameworks, such as those involving coordinated control among multiple buildings.

The concept of grid-interactive and cooperative buildings is an emerging topic in the field of DR \cite{buildings13102663}. Research indicates that a network of grid-interactive buildings can collaborate to provide a larger aggregated flexible demand to the power system \cite{PINTO2022118497}. In this structure, interconnected buildings coordinate demand adjustments under the supervision of a facility management team, which also serves as the primary interface with the utility for receiving demand response (DR) signals. The facility manager then decides whether to comply with the DR command, with other buildings following the facility manager's directives.
To minimize the cost of communication infrastructure, it is recommended that each building limit its interactions to nearby neighboring buildings. 

As seen, DR represents a prominent power system application in which human users play a pivotal role.  However,  to date, human involvement in DR has primarily been represented either through its impact on load patterns or as constraints within optimization models \cite{10323198, 6311454}. Thus, the existing DR models often overlook human involvement in decision-making, including factors such as consumer preferences, willingness to participate, and the perceived benefits of DR programs. Incorporating human factors  and decision-making behavior into power system control design enhances model credibility while improving system flexibility, adaptability, and reliability. 

To this end, this paper proposes a multi-agent control scheme that models and integrates the role of a human facility manager in the control architecture and decision-making framework for coordinated management of multiple buildings. Integrating humans into a control system introduces complexities due to variability and nonlinear, unpredictable responses, posing challenges in achieving reliable control. Addressing these complexities necessitates a rigorous  approach to design robust and reliable human-in-the-loop (HITL) control systems.

Since in this application the proposed HITL multi-agent  controller aims to coordinate the behavior of a multi-building  system based on human command, it is structured as a leader-following consensus control. The leader-following consensus control, first introduced in \cite{1333204} and \cite{Olfati-Saber2007Consensus}, drives a group of agents toward agreement on designated state variables, thereby achieving coordinated behavior driven by the leader’s prescribed objective \cite{https://doi.org/10.1002/rnc.1144}. In this strategy, one or more leader agents provide a reference state that guides follower agents toward a shared goal. This approach is particularly well-suited for cyber physical human systems (CPHS) where humans interact with multiple agents in a leader role to coordinate system-wide behavior.  In this structure, humans serve as non-autonomous leaders \cite{10083099} who provide inputs to the system, ensuring that the multi-agent system remains aligned with desired objectives. The flexibility of leader types and autonomy levels within this control aligns with the diverse roles humans play in CPHS. This enables the design of adaptive and efficient systems capable of achieving consensus between the human leader and multi-agent systems (MASs) \cite{HONG20061177}.

When humans serve as leaders in leader–follower systems, they function as non-autonomous agents with nonzero inputs, allowing the system to respond more effectively to changing operational conditions and user preferences. Despite its potential, developing human-in-the-loop consensus control \cite{8618994, 10054599, 9310661, 10185611} in demand response applications involves addressing several nontrivial challenges: (1) The facility manager’s decisions constitute unknown exogenous inputs that reflect complex trade-offs among thermal comfort, grid-service provision, and participation costs. These decisions are neither directly measurable nor predictable by the follower buildings. (2) Buildings exhibit heterogeneous thermal dynamics and must satisfy hard operational constraints throughout DR events, requiring control strategies that respect diverse physical limits and comfort bounds. (3) Communication constraints necessitate distributed control architectures, where each building operates using only locally available measurements and limited neighbor information. (4) Thermal models of the buildings violate the structural assumptions required by standard unknown-input observers, making conventional estimation techniques inapplicable, motivating the development of alternative observers. Addressing these interrelated challenges requires a comprehensive framework that incorporates human flexibility into multi-building DR control, a gap not yet addressed in the existing literature.

Technically, HITL multi-building DR systems require state estimation because some internal states needed for feedback control are not directly measurable in practice. In particular, the derivative of the thermal comfort state cannot be reliably obtained by numerical differentiation of noisy sensor measurements. This challenge is further complicated by distributed communication constraints and unknown human inputs, motivating the need for observer design. To address this requirement, two primary approaches can be considered: Full-order distributed observers, which provide complete state estimates while rejecting known disturbances \cite{9062580}, and interval observers that bound estimation uncertainty when dealing with human inputs \cite{10054599}. Full-order observers suit applications requiring precise coordination with known disturbance patterns, while interval observers offer better robustness under variable human behavior and high uncertainty.

In some power system applications, such as the scenario considered in this work, the structural prerequisites for canceling or compensating unknown inputs, referred to as observer matching conditions, are often not satisfied. These conditions are essential to ensure that unmeasurable system states can be accurately estimated from the available outputs and control inputs. As a result, existing observer design methods such as those proposed in \cite{KALSI2010347, 280770, 1278} are not applicable in this context.
For systems with unknown inputs, sliding-mode observers are often used when the typical conditions for directly estimating the system's states are not met \cite{1104530, tranninger2023unknown}. To this end, alternative outputs are generated to assist with the observer design. However, these approaches are typically focused on single-agent systems. For multi-agent systems, such as leader-following systems, the design of distributed observers becomes more complex, especially when the direct estimation of states is not feasible.

This paper presents a human-in-the-loop, distributed multi-agent consensus control approach for demand-side management in cooperative heterogeneous buildings. In this framework, the facility manager serves as a non-autonomous leader, determining the level of participation in demand response events while considering occupants' comfort. Cooperative buildings follow the leader's decisions,  despite not having direct knowledge of the facility manager's decisions. This challenge is addressed using unknown input sliding-mode observers leveraging subsystem decomposition techniques. The proposed HITL leader-follower consensus protocol enhances energy management flexibility and balances thermal comfort during demand response.

The contributions of this paper are as follows:
\begin{enumerate}
    \item Developing a demand-side management model that incorporates human decision-making and behavioral uncertainty into system operations, where the human acts as the leader in a leader-follower multi-agent system.
    \item Designing a distributed observer for the human-in-the-loop leader-follower multi-agent system with an unknown input, where the observer matching condition is not satisfied.
    \item Performing stability analysis of the human-in-the-loop leader-follower multi-agent system equipped with a distributed observer, operating under unknown input conditions.
\end{enumerate}
\vspace{-0.15pt}
The remainder of the article is organized as follows. In Section \ref{seq:2}, the model of the system and its description are presented. The human-in-the-loop leader-following consensus controller, including the sliding-mode observer,  is detailed in Section \ref{seq:3}. The proof of the stability of the proposed method is presented in Section \ref{seq:4}. In Section \ref{seq:5}, simulation results are provided to demonstrate the effectiveness of the proposed method. Finally, conclusion remarks are given in Section \ref{seq:6}.
\vspace{-0.3pt}
\section{Problem Statement and System Modeling}\label{seq:2}
\vspace{-0.05in}
\subsection{Problem Statement}
\vspace{-0.05in}
A multi-building campus environment is considered in this paper, where buildings communicate locally with one another to participate in DR events. The objective is to reduce power demand in a coordinated and flexible manner to support the grid while maintaining occupant thermal comfort. Within this system, a designated facility building receives the DR signal from the utility and serves as the leader. The facility manager determines the DR participation level.

When a utility issues a DR signal, the facility manager determines the acceptable comfort compromise. The facility manager then plans a power-curtailment trajectory for the facility building. The remaining heterogeneous buildings track the facility building’s state and adjust their power demand accordingly.

Given this framework, the system is modeled as a leader-follower MAS, where the cooperative buildings aim to track the state of the facility building. Since some buildings' states cannot be directly measured in practice, estimation methods are required. Additionally, to reduce communication overhead, each building exchanges information solely with its immediate neighbors. To compensate for the limited information sharing, distributed observers are implemented to estimate essential system states for consensus control.

To ensure both thermal comfort and effective load reduction during DR events, buildings must reach consensus in accordance with the facility manager’s instructions. However, incorporating a human as the leader and decision-maker into MAS control frameworks introduces significant complexity, limiting the applicability of existing consensus control methods. Unlike autonomous leaders, the human operator provides non-autonomous and unknown control inputs, which are unmeasurable and unpredictable. This issue challenges the design of traditional consensus mechanisms. In addition,  conventional observer-based designs often require the observer-matching condition, which is violated here as the unknown input is not directly reflected in the measured output. As a result, existing methods struggle to reliably estimate system states under uncertain human decision-making and limited inter-building communication, leading to degraded performance and potential instability.

To address these challenges, this paper develops a distributed human-in-the-loop consensus framework that uses sliding-mode observers to coordinate heterogeneous buildings under unknown human inputs and limited communication.
\vspace{-0.2in}
\subsection{Basic Graph Theory}
\vspace{-0.05in}
Consider a MAS composed of $N$ agents and one leader. The interaction between agents and the leader is modeled as a directed graph $\mathcal{G} = (\mathcal{V}, \mathcal{E})$, where $\mathcal{V} = \{v_0, v_1, \dots, v_N\}$ is the set of nodes, and $\mathcal{E} \subseteq \mathcal{V} \times \mathcal{V}$ represents the set of edges, with $(v_i, v_j) \in \mathcal{E}$ indicating that agent $i$ receives information from agent $j$. The leader is represented by node $v_0$. The adjacency matrix of the graph is denoted as {\color{blue}$\mathbf{A} = [\mathbf{a_{ij}}] \in \mathbb{R}^{N \times N}$}, where {\color{blue}$\mathbf{a_{ij}} > 0$} if agent $j$ interacts with agent $i$ and {\color{blue}$\mathbf{a_{ij}} = 0$} otherwise. The in-degree matrix {\color{blue}$\mathbf{D}$} is defined as {\color{blue}$\textbf{D} = \text{diag}(\mathbf{d_1}, \mathbf{d_2}, \dots, \mathbf{d_N})$}, where {\color{blue}$\mathbf{d_i} = \sum_{j=1}^{N} \mathbf{a_{ij}}$} represents the in-degree of node $i$. The Laplacian matrix is then given by {\color{blue}$\mathbf{L} = \mathbf{D} - \mathbf{A}$}.

To ensure consensus within the system, the communication graph $\mathcal{G}$  should have a spanning tree, with the leader $v_0$ acting as the root. The consensus protocol is designed such that all agents synchronize with the leader's state using only local information from their neighbors. Furthermore, the leader's control input $u_0(t)$ is unknown to followers, but the leader's state is available to a subset of agents.
\vspace{-0.15in}
\subsection{System Model}
\vspace{-0.05in}
In this section, the thermal dynamics of a building equipped with multiple Thermostatically Controlled Loads (TCLs), such as air conditioners or heaters, are discussed. Each building aims to maintain a comfortable indoor temperature while minimizing energy consumption. Modeling  begins with  the thermal behavior of a single building and is then extended to a multi-building scenario.

The average indoor temperature of the $i^{th}$ building, which contains $n_i$ TCLs, can be described by the  differential equation in (\ref{eq:1}) \cite{kundu2011modelingcontrolthermostaticallycontrolled, wang2019distributed}:
\vspace{-0.1in}
{\color{blue}
\begin{equation}\label{eq:1}
C_{i}^{th}\frac{d\theta^i(t)}{dt}=\frac{\theta^{amb}(t) - \theta_i(t)}{R_i^{th}}-\eta^i\frac{P_i(t)}{n_i}
\end{equation}
}
\vspace{-0.05in}
where $\theta_i(t)$ represents the indoor temperature of the $i^{th}$ building in degrees Celsius ($^{\circ}C$). $\theta^{amb}(t)$ is the ambient (outdoor) temperature. $C_i^{th}$ is the thermal capacitance of the $i^{th}$ building, indicating its ability to store thermal energy. $R_i^{th}$ is the thermal resistance of the $i^{th}$ building, indicating to what extent the building resists heat flow. $\eta^i$ is the thermal coefficient, which is positive for cooling devices (e.g., air conditioners) and negative for heating devices (e.g., heaters). $P_i(t) = \sum_{j=1}^{n_i}m^j(t)P_i^j(t)$ is the total power demand of all TCLs in the $i^{th}$ building, where $m^j(t)$ is a binary variable indicating whether the $j^{th}$ TCL is on $(1)$ or off $(0)$, and $P_i^j(t)$ is the power demand of the $j^{th}$ TCL in the $i^{th}$ building. 

Given that different buildings have different preferred indoor temperature ranges, a thermal comfort index $\varepsilon_i(t)$ is defined by normalizing the indoor temperature relative to the desired  temperature range, as shown in (\ref{eq:2}). This index helps quantify and control the thermal comfort of each building.

\begin{equation}\label{eq:2}
\varepsilon_i(t)=\frac{\theta_i(t)-\theta^{set}_i+\Delta\theta_i}{2\Delta\theta_i}
\end{equation}
where $\theta^{set}_i$ is the desired setpoint temperature and $\Delta\theta_i$ is the allowable temperature deviation around the desired setpoint. When $\theta_i(t)=\theta_i^{set}$, then $\varepsilon_i(t) = 0.5$ corresponds to the desired thermal-comfort state. Substituting \eqref{eq:2} into \eqref{eq:1}, the dynamics of the thermal comfort index can be expressed as (\ref{eq:3}):

\begin{equation}
\begin{split}\label{eq:3}
\dot\varepsilon_i(t)=-\frac{1}{C_i^{th}R_i^{th}}\varepsilon_i(t)-\frac{\eta_i}{2\Delta\theta_iC_i^{th}n_i}P_i(t)+\\
{\color{blue}\frac{\theta^{amb}(t)-\theta^{set}_i+\Delta\theta_i}{2\Delta\theta_iC^{th}_iR^{th}_i}}
\end{split}
\end{equation}
where $n_i$ is the number of TCLs in the $i^{th}$ building. Defining the auxiliary state $\xi_i(t)=\dot\varepsilon_i(t)$ allows the comfort-index dynamics to be rewritten as the second-order state-space model in (\ref{eq:4}):
\begin{equation}\label{eq:4}
\xi_i(t)= a_i\varepsilon_i(t)+b_iP_i(t)+{\color{blue}r_i(t)}
\end{equation}
where, $a_i=-\frac{1}{C_i^{th}R_i^{th}}$, $b_i=-\frac{\eta_i}{2\Delta\theta_iC_i^{th}n_i}$, and {\color{blue}$r_i=\frac{\theta^{amb}(t)-\theta^{set}_i+\Delta\theta_i}{2\Delta\theta_iC^{th}_iR^{th}_i}$}.

Thus, the second-order state-space equation of the multi-agent system can be written as (\ref{eq:5}):

\begin{equation}\label{eq:5}
\left\{\begin{matrix}
\dot\varepsilon_i(t)=\xi_i(t) \\
\dot\xi_i(t)=a_i\xi_i(t)+b_iu_i(t)+ {\color{blue}d_i}
\end{matrix}\right.
\end{equation}
where$\ u_i(t)= \dot P_i(t)$ is the control input for trajectory tracking, and {\color{blue}$d_i (t) = \dot r_i (t)$} represents the disturbance, which includes the time derivative of the ambient temperature. 

Since ambient temperature typically varies slowly over the course of a day, its time derivative is assumed to be negligible. Therefore, this term is omitted from the model to streamline the control design without significantly affecting accuracy. To this end, the state-space equation is reduced to the form expressed in (\ref{eq:6}):
\vspace{-0.05in}
\begin{equation}\label{eq:6}
\dot{x}_i = A_i x_i + B_i u_i, \quad y_i = C x_i
\end{equation}
where $A_i = \begin{bmatrix} 0 & 1\\ 0 & a_i \end{bmatrix}$, $B_i = \begin{bmatrix} 0\\ b_i \end{bmatrix}$, $C = \begin{bmatrix} 1 & 0 \end{bmatrix}$ with agent-specific thermal parameters $a_i$ and $b_i$.

Nevertheless, the simulation model retains the time-varying ambient temperature profile to evaluate the performance of the proposed method under realistic thermal conditions.

\vspace{-0.17in}
\section{Proposed Control Framework}\label{seq:3}
\vspace{-0.05in}
\subsection{Control objective}
\vspace{-0.05in}
The primary objective of the control system is to coordinate the participation of multiple buildings in DR events while maintaining indoor temperatures within an acceptable range and ensuring consistent comfort levels across all participating buildings during normal operation. 

During DR events, the utility sends a command signal requesting a reduction in the aggregated power demand of the entire building complex by a specified amount. The human operator, serving as the facility manager, evaluates the command and determines participation based on buildings' operating  conditions and the relative priority assigned to occupants’ comfort. Fig.~\ref{fig:sch} depicts the structure of multiple buildings participating in the DR program.

The facility manager's decision-making role in this application is crucial, as it should account for buildings' operating conditions, occupants' energy needs, and the trade-off between reducing electricity costs and maintaining occupants' comfort. Incorporating a human-in-the-loop introduces flexibility and allows the system to leverage the nuanced judgment of human decision-makers, enhancing its ability to adapt to dynamic conditions and align with end-user preferences.

The proposed leader-follower consensus protocol enhances energy-management effectiveness in building complexes while ensuring that occupant comfort is not unduly compromised during demand response events. The human leader's role is to dynamically balance these often conflicting objectives, leveraging both automated systems and human judgment to achieve optimal outcomes.
\begin{figure}[!t] 
    \centering
    \includegraphics[width=0.6\linewidth]{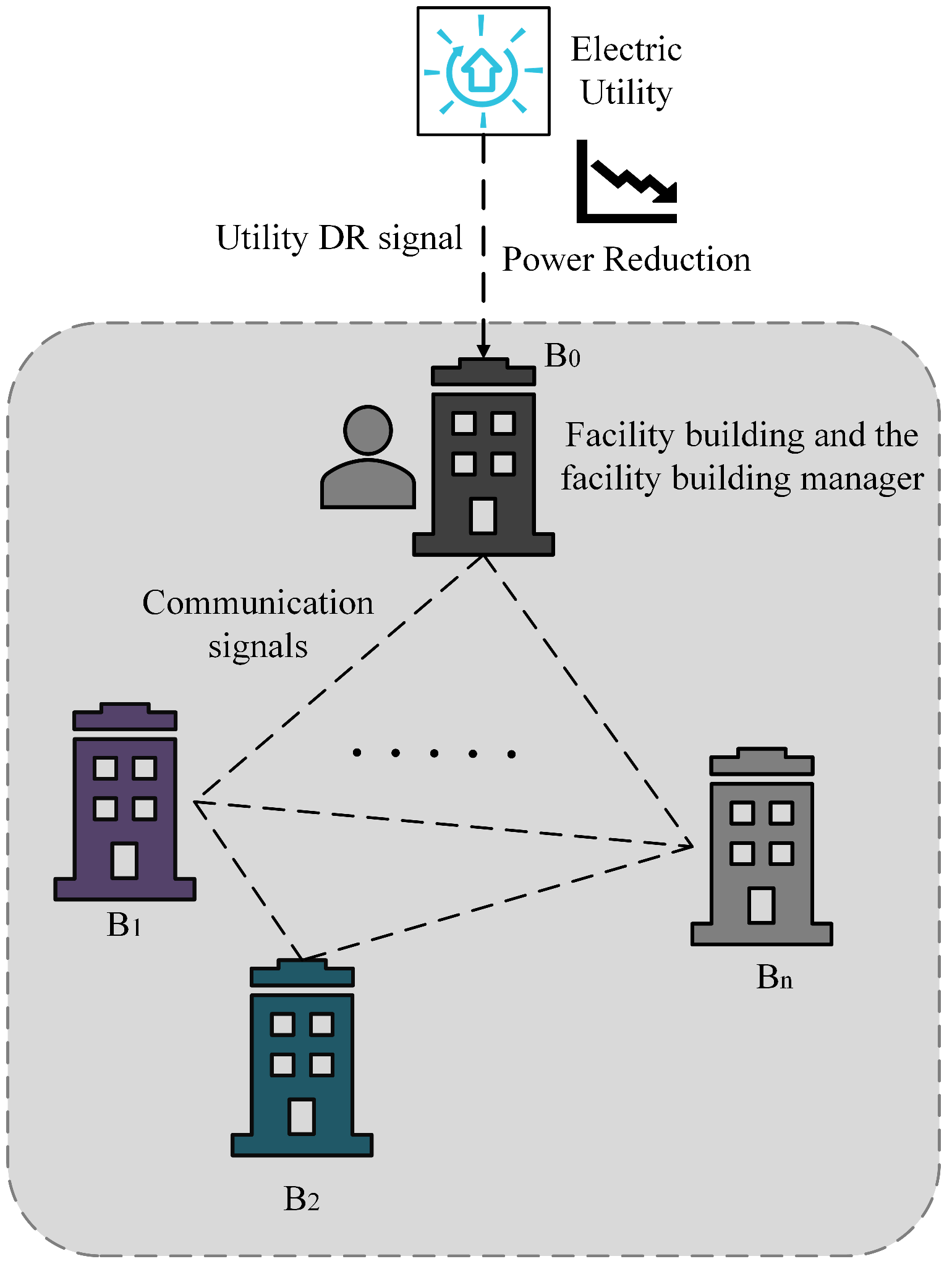} 
    \caption{An overview of the proposed method for the human-in-the-loop demand response participation and comfort management}
    \vspace{-8pt}
    \label{fig:sch}
    
\end{figure}
\vspace{-0.25in}
\subsection{Consensus Protocol}
\vspace{-0.05in}
To achieve coordinated control across buildings, a consensus protocol is employed so that buildings coordinate their thermal states while balancing comfort and demand reduction. Since each building has physical input limits, the nominal control input is defined using the saturation function \eqref{eq:sat_control} \cite{6237497}:
\begin{equation}
    u_i=\operatorname{sat}_{\bar u_i}(v_i),
    \qquad
    \operatorname{sat}_{\bar u_i}(v_i)
    :=
    \operatorname{sgn}(v_i)\min\{|v_i|,\bar u_i\},
    \label{eq:sat_control}
\end{equation}

Here, $\bar u_i>0$ is the maximum admissible input magnitude. Therefore, $|u_i(t)|\le \bar u_i$ for all $t\ge0$. The pre-saturation
control signal \eqref{eq:pre_sat_control} is
\begin{equation}
    v_i=-K_i\delta_i-\rho_i\operatorname{sgn}(K_i\delta_i),
    \label{eq:pre_sat_control}
\end{equation}
where $K_i\in\mathbb{R}^{1\times 2}$ is the feedback gain, $\rho_i>0$ is the
discontinuous control gain, and the consensus error $\delta_i \in \mathbb{R}^2$ for agent $i$ is defined in (\ref{eq:9}):
\begin{equation} \label{eq:9}
\delta_i = \sum_{j \in \mathcal{N}_i} a_{ij}(x_i - x_j) + a_{i0}(x_i - x_0)
\end{equation}
where $\mathcal{N}_i$ denotes the neighbor set of agent $i$, and $a_{i0} > 0$ only if agent $i$ has direct communication with the leader.
The state vector $x_i$ is defined in \eqref{eq:statvec} as:
\begin{equation}
x_i=
\begin{bmatrix}
    \varepsilon_i\\
    \xi_i
\end{bmatrix}
\label{eq:statvec}
\end{equation}

It is important to note that $\xi _i (t)=\dot{\varepsilon}_i(t)$ is not directly measurable by sensors. Thus, in the presence of sensor noise in  $\varepsilon(t)$, the computed derivative $\xi(t) = \frac{\varepsilon(t)-\varepsilon(t-1)}{T_s}$ is also susceptible to noise because it acts as a high-pass filter, which may degrade the accuracy of subsequent analyses or control actions. Therefore, distributed observers should be designed to reliably estimate the states of the multi-agent system in the presence of unknown inputs.

The schematic of the proposed control framework is shown in Fig.~\ref{fig:diag}. As illustrated in this figure, the system employs distributed observers to estimate the consensus error among buildings. Each observer processes locally available measurements and neighboring information to estimate the consensus error. This estimated error is then fed into a consensus protocol, which communicates with buildings  through a communication network. The protocol ensures that each building  adjusts its state appropriately, aligning the network with the facility manager's signal as the leader. The leader’s state \( x_0 \) provides a reference for buildings  through the consensus protocol, thus facilitating coordinated behavior across the entire system.

\begin{figure}[!t] 
    \centering\includegraphics[width=0.8\linewidth]{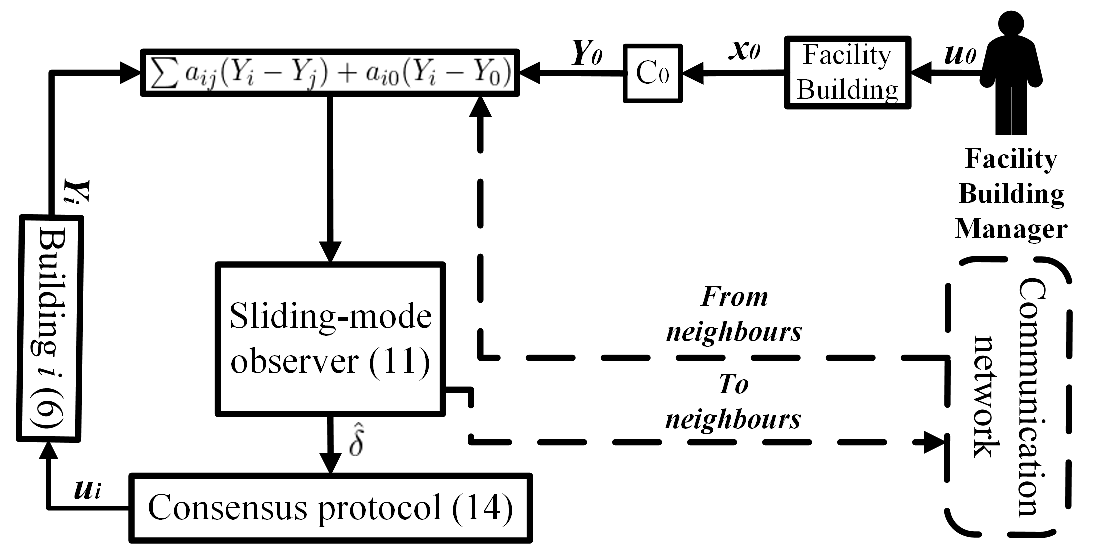} 
    \caption{The block diagram of the multi-agent control system}
    \label{fig:diag}
    \vspace{-5pt}
\end{figure}
\vspace{-0.2in}
\subsection{Observer Design}
\vspace{-0.05in}
This section is focused on designing distributed observers for the multi-building system to ensure robust state estimation in the presence of unknown input.   For designing such observers, two conditions should be satisfied:
\begin{enumerate}
    \item $\text{rank}(C_iB_i)=\text{rank}(B_i)$.\\
    \item $\text{rank}
    \begin{bmatrix}
        sI-A_i & B_i\\
        C_i & 0
    \end{bmatrix}=n+m \; \forall s\in\mathbb{C}, \; \text{Re}(s) \ge 0$
\end{enumerate}

Condition 1 ensures that the number of unknown inputs impacting the output matches those influencing the system states. Condition 2 ensures that the system is observable.

The challenge in this application is that condition 1, known as the observer-matching condition, does not hold. Therefore, observer design approaches that are proposed in \cite{9062580, HONG2008846, 280770, 1278} cannot be applied. 

For the system in (\ref{eq:6}), it can be shown that the Rosenbrock matrix $P_i(s) = \begin{bmatrix} sI-A_i & B_i \\ C_i & 0 \end{bmatrix}$ maintains full rank for all finite values of $s$, indicating the system has no invariant zeros and is strongly observable \cite{HAUTUS1983353}. However,
$C_iB_i = \begin{bmatrix} 1 & 0 \end{bmatrix} \begin{bmatrix} 0\\ b_i \end{bmatrix} = 0$, which implies that $\text{rank}(C_iB_i) = 0$, not equal to  $\text{rank}(B_i)=1$. Therefore, the observer matching condition is violated. That is,  the unknown inputs are not reflected in the
measured outputs. As a result, standard linear unknown input observers cannot be employed, despite the system satisfying the strong observability criterion. Although \(C_iB_i=0\), \(C_iA_iB_i=b_i\neq0\); hence the unknown input appears at relative degree two from \(y_i^\delta\). This motivates the use of a super-twisting/sliding-mode observer despite the failure of the observer-matching condition. 

To this end, sliding-mode observers provide an effective solution \cite{KALSI2010347} due to their discontinuous injection mechanism, which enforces a sliding manifold on the estimation error dynamics. Once the system reaches the sliding surface, the equivalent injection compensates for the effect of unknown inputs without requiring the matching condition. This property enables robust reconstruction of the system states under bounded unknown inputs. 
\vspace{-0.15in}
\subsection{Sliding-mode Observer Design}
\vspace{-0.05in}
This section is focused on designing a sliding-mode observer to estimate the consensus error. Each agent $i$ has distinct system matrices $A_i$ and $B_i$ based on its thermal characteristics, but the same output matrix $C_i = C = \begin{bmatrix} 1 & 0 \end{bmatrix}$. 

Furthermore, the output of the consensus error can be measured as in (\ref{eq:10}):
\begin{equation}\label{eq:10}
y_i^{\delta} = \sum_{j \in \mathcal{N}_i} a_{ij}(y_i - y_j) + a_{i0}(y_i - y_0) = C_i\delta_i
\end{equation}

Then the sliding-mode observer \cite{LEVANT1998379} for agent $i$ is designed in \eqref{eq:11}:
\begin{equation}\label{eq:11}
\dot{\hat{\delta}}_i = A_i\hat{\delta}_i + \ell_i(y_i^{\delta} - C_i\hat{\delta}_i)
\end{equation}
where $\hat{\delta}_i \in \mathbb{R}^2$ is the consensus error estimate and $\ell_i(\cdot)$ is the agent-specific sliding-mode injection term. The injection term \eqref{eq:12} is chosen as:
\begin{equation}\label{eq:12}
\ell_i(e_{y,i}) = \begin{bmatrix}
\kappa_{i,1} |e_{y,i}|^{1/2} \text{sgn}(e_{y,i}) \\
\kappa_{i,2} \text{sgn}(e_{y,i})
\end{bmatrix}
\end{equation}
where $e_{y,i} = y_i^{\delta} - C\hat{\delta}_i$ is the output estimation error for agent $i$, and $\kappa_{i,1}, \kappa_{i,2} > 0$ are agent-specific observer gains.



Finally, the sliding-mode observer-based leader-follower consensus protocol for the human-in-the-loop  multi-agent system can be written as \eqref{eq:controller_final}:
\begin{equation}
    u_i
    =
    \operatorname{sat}_{\bar u_i}
    \left(
    \hat v_i
    \right).
    \label{eq:controller_final}
\end{equation}
where $\hat v_i$ is defined in \eqref{eq:v_hat} as follows:
\feq{\label{eq:v_hat}
\hat v_i = -K_i\hat{\delta}_i
    -
    \rho_i\operatorname{sgn}(K_i\hat{\delta}_i)
}
To analyze the closed-loop dynamics in the region where the commanded input remains within the physical actuator limits, the design parameters are chosen to satisfy \eqref{eq:compat}:

\begin{equation}
\rho_{\max} := \max_{i\in[N]} \rho_i < \bar v := \min_{i\in[N]} \bar u_i .
\label{eq:compat}
\end{equation}

According to ~\eqref{eq:compat}, the admissible set $\Omega$ and the corresponding radius $r_\Omega$ are defined in \eqref{eq:Omegadef} as follows:
\feq{
r_\Omega &:= \frac{\bar v - \rho_{\max}}{\max_{i\in[N]} \|K_i\|} > 0,\\
\quad\Omega &:= \left\{ (\delta,e^\delta) \in \mathbb{R}^{4N} :
\|\hat{\delta}_i\| \le r_\Omega,\ \forall i\in[N] \right\}.
\label{eq:Omegadef}
}
On $\Omega$, the saturation in~\eqref{eq:controller_final} is inactive
as shown in Lemma~\ref{lem:sat}.
\vspace{-0.2in}
\section{Stability Analysis}\label{seq:4}
For compactness, let $D_0=\mathrm{diag}(a_{10},\ldots,a_{N0})$ denote the leader-pinning matrix and let $[N]:=\{1,\ldots,N\}$ denote the follower index set. Define $\tilde x=\mathrm{col}(x_1-x_0,\ldots,x_N-x_0)$, so that $\delta=((L+D_0)\otimes I_2)\tilde x$. Let $d_i=\sum_{j=1}^N a_{ij}$ and $d^*=\max_{i\in[N]}(d_i+a_{i0})$. Moreover, let $\bar u_{\mathrm{all}}=\max\{\bar u_0,\bar u_1,\ldots,\bar u_N\}$. The observer error is denoted by $e_i^\delta=\delta_i-\hat\delta_i=\mathrm{col}(e_{i,1},e_{i,2})$, and $\varphi_i$ denotes the lumped perturbation entering the observer error dynamics of agent $i$.
\begin{assumption}[Spanning tree]\label{ass:A1}
    The augmented graph (follower edges and leader pinning) contains a spanning tree rooted at the leader node $v_{0}$. Then $L+D_{0}$ is nonsingular, and we set $\sigma:=\norm{((L+D_{0})\otimes\bbI_{2})^{-1}}>0$.
\end{assumption}

\begin{assumption}[Standing hypotheses on plant and leader]\label{ass:Astanding}
     $u_{0}:[0,\infty)\to\reals$ is Lebesgue-measurable with $\abs{u_{0}(t)}\le\bar u_{0}$ a.e.;
     $\norm{x_{0}(t)}\le\bar x_{0}$ for all $t\ge 0$;
     $\norm{A_{i}-A_{j}}\le\alpha_{A}$ and $\norm{B_{i}-B_{j}}\le\alpha_{B}$ for all $i,j\in\{0,\ldots,N\}$.
\end{assumption}

\begin{assumption}[Structured stabilizing feedback]\label{ass:Acl}
    There exists a block-diagonal feedback gain $\ccalK=\operatorname{blockdiag}(K_i)\in\reals^{N\times 2N}$ with $\max_{i\in[N]}\norm{K_{i}}>0$ such that the closed-loop matrix $\ccalA$ in~\eqref{eq:Acl} is Hurwitz, equivalently, there exist matrices $P\succ 0$ and $Q\succ 0$ satisfying the Lyapunov inequality in \eqref{eq:LMI1}:
    \feq{
        \ccalA^{T}P+P\ccalA\preceq -Q.
        \label{eq:LMI1}
    }
\end{assumption}

\begin{assumption}[Filippov regularity for STA Lyapunov function]
\label{ass:STAac}
The discontinuous $\text{sgn}(\cdot)$ terms in~\eqref{eq:controller_final}
and~\eqref{eq:11} are interpreted in the Filippov sense. For each
$\widehat\Delta_i \ge 0$ and each $P_{e,i} \succ 0$ obtained from the
LMI certificate of~\cite{MorenoOsorio2012} at $\Delta = \widehat\Delta_i$,
every Filippov solution of the per-agent observer-error system
with $\abs{\varphi_i(t)} \le \widehat\Delta_i$ a.e.\ satisfies that
$t \mapsto \zeta_i(t)^T P_{e,i} \zeta_i(t)$ is absolutely continuous on
every compact subinterval of $[0,\infty)$.
\end{assumption}

\begin{assumption}[Self-consistent design tuple]\label{ass:selfcons}
The design parameters $\{K_i,\rho_i\}$ satisfy~\eqref{eq:compat}. Moreover, there exist $\mu>0$ and $\bar V_*>V_*>0$, certified per-agent
bounds $\{\widehat\Delta_i\ge 0\}_{i\in[N]}$, observer gains
$\{(\kappa_{i,1},\kappa_{i,2})\}$ satisfying the gain conditions
of~\cite{MorenoOsorio2012} at $\Delta=\widehat\Delta_i$, with the
resulting Lyapunov data $\{P_{e,i}\succ 0,\,\lambda_i>0\}_{i\in[N]}$,
such that
$\widehat\Delta_i \ge \Delta_i(\bar M,\bar\delta)
:= \abs{a_i}\bar e_2 + \abs{b_i}\bar U + c_H \bar\delta + d_H$
for all $i\in[N]$, and the joint conditions~\eqref{eq:G3},~\eqref{eq:INVa},
\eqref{eq:INVb} hold simultaneously.
\end{assumption}

Differentiating the consensus error in~\eqref{eq:9} gives~\eqref{eq:P1}, with $U_i$ and $H_i$ defined in \eqref{eq:Ui_def} and \eqref{eq:Hi_def}, respectively.
\begin{equation}
    \dot{\delta}_{i}
    =
    A_i\delta_i+B_iU_i+H_i,
    \label{eq:P1}
\end{equation}
\begin{align}
    U_i
    &:=
    \sum_{j\in\mathcal N_i}a_{ij}(u_i-u_j)
    +a_{i0}(u_i-u_0),
    \label{eq:Ui_def}
    \\
    H_i
    &:=
    \sum_{j\in\mathcal N_i}a_{ij}
    \big[(A_i-A_j)x_j+(B_i-B_j)u_j\big]
    \nonumber\\
    &\quad
    +a_{i0}
    \big[(A_i-A_0)x_0+(B_i-B_0)u_0\big].
    \label{eq:Hi_def}
\end{align}
Stacking \eqref{eq:P1} over all agents yields \eqref{eq:P2}.
\begin{equation}
    \dot{\delta}
    =
    \mathcal A_d\delta+\mathcal B U+\mathcal H,
    \label{eq:P2}
\end{equation}
where $\mathcal A_d=\operatorname{bdiag}(A_i)$,
$\mathcal B=\operatorname{bdiag}(B_i)$, 
$U=(U_i)$, and $\mathcal H=(H_i^T)$.

\begin{lemma}[saturation]\label{lem:sat}
Globally, $\abs{u_i(t)}\le\bar u_i$ for every $i\in[N]$ and every $t\ge 0$.
In addition, $u_i = \hat v_i$ on $\Omega$.
\end{lemma}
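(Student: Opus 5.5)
The plan is to verify the two claims directly from the definitions in \eqref{eq:sat_control}, \eqref{eq:controller_final}, \eqref{eq:v_hat} and \eqref{eq:Omegadef}. No dynamics are needed.

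\emph{Global bound.} For any real argument $v$, the saturation returns $\operatorname{sgn}(v)\min\{|v|,\bar u_i\}$. Its modulus is therefore $|\operatorname{sgn}(v)|\min\{|v|,\bar u_i\}$, which is at most $\bar u_i$ because $|\operatorname{sgn}(v)|\le 1$. Since $u_i(t)=\operatorname{sat}_{\bar u_i}(\hat v_i(t))$ at every $t\ge 0$, the first claim follows pointwise in time. The only subtlety is the Filippov interpretation of Assumption~\ref{ass:STAac}. There, $\operatorname{sgn}(K_i\hat\delta_i)$ may take set values in $[-1,1]$ when $K_i\hat\delta_i=0$. So I would note that every selection $s\in[-1,1]$ produces a pre-saturation value $\hat v_i=-K_i\hat\delta_i-\rho_i s$, and saturating any such value still yields modulus at most $\bar u_i$. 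Moreover, the convex hull of the saturated values stays in $[-\bar u_i,\bar u_i]$, so the bound holds for the Filippov-regularized input as well.

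\emph{Inactivity on $\Omega$.} On $\Omega$ we have $\|\hat\delta_i\|\le r_\Omega$. By Cauchy--Schwarz, $|K_i\hat\delta_i|\le\|K_i\|\,r_\Omega$. Using the definition of $r_\Omega$, this gives $\|K_i\|\,r_\Omega=\|K_i\|(\bar v-\rho_{\max})/\max_j\|K_j\|\le\bar v-\rho_{\max}$. Hence, for any selection of the sign term, $|\hat v_i|\le |K_i\hat\delta_i|+\rho_i\le \bar v-\rho_{\max}+\rho_{\max}=\bar v\le\bar u_i$, where the last step uses $\bar v=\min_j\bar u_j$. Because $|\hat v_i|\le\bar u_i$, we get $\min\{|\hat v_i|,\bar u_i\}=|\hat v_i|$, and so $\operatorname{sat}_{\bar u_i}(\hat v_i)=\operatorname{sgn}(\hat v_i)|\hat v_i|=\hat v_i$. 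Here \eqref{eq:compat} guarantees $r_\Omega>0$, so $\Omega$ is a nontrivial set.

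There is no real obstacle. The only point needing care is the Filippov/set-valued reading of $\operatorname{sgn}$ at $K_i\hat\delta_i=0$ (and at $\hat v_i=0$ inside the saturation). I would handle it as above, by arguing for every selection in $[-1,1]$ and observing that both bounds are preserved under convex combinations.
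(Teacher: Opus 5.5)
Your proof is correct and takes the same direct-from-definitions route as the paper, whose entire proof is the remark that the lemma follows from the definition of $\operatorname{sat}_{\bar u_i}(\cdot)$. You also spell out the step the paper leaves implicit for the second claim, namely $|\hat v_i|\le\|K_i\|\,r_\Omega+\rho_i\le\bar v\le\bar u_i$ on $\Omega$ (using $\rho_{\max}<\bar v$ and the definition of $r_\Omega$), and you note that it holds for every Filippov selection of the sign term.
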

The proof follows directly from the definition of $\operatorname{sat}_{\bar u_i}(\cdot)$.
\vspace{-0.15in}
\subsection[Closed-Loop Dynamics on Omega]{Closed-Loop Dynamics on $\Omega$}
\label{sec:closedloop}

On $\Omega$, Lemma~\ref{lem:sat} gives $u_i=v_i$. Substitution of the protocol \eqref{eq:controller_final} into~\eqref{eq:P2} yields \eqref{eq:P3}.
\begin{equation}
    \dot{\delta}
    =
    \ccalA\delta+\ccalF e^\delta+\ccalH+\ccalR(t),
    \label{eq:P3}
\end{equation}
where $\ccalA$, $\ccalF$, and $\ccalR$ are defined in \eqref{eq:Acl} and \eqref{eq:Rt}.
\feq{\label{eq:Acl}
\ccalA=\ccalA_d-\ccalB(L+D_0)\ccalK,\quad
\ccalF=\ccalB(L+D_0)\ccalK,
}
\feq{\label{eq:Rt}
\ccalR(t)=-\ccalB(L+D_0)\bbrho s(t)-\ccalB d^0u_0(t),
\quad
\norm{\ccalR(t)}\le c_R,
}
with $c_R$ being defined in \eqref{eq:cr}.
\feq{\label{eq:cr}
c_R=\norm{\ccalB}\norm{L+D_0}\rho_{\max}\sqrt{N}
+\norm{\ccalB}\norm{d^0}\bar u_0 .
}
Here, $s(t) \in \reals^N$ is a measurable selection from
$(\text{sgn}(K_i\hat\delta_i))_{i\in[N]}$, and the bound
$\norm{\ccalR(t)} \le c_R$ holds uniformly in admissible
selections via $\abs{s_i(t)} \le 1$.
\vspace{-0.15in}
\subsection{Perturbation Bounds}\label{sec:lem1}
\begin{lemma}[Perturbation bounds]\label{lem:Hi}
Under Assumptions~\ref{ass:A1}--\ref{ass:Astanding}, the following bounds in \eqref{eq:Hbound} hold:
\begin{equation}
    \abs{U_i(t)}\le \bar U:=2d^\ast\bar u_{\mathrm{all}},
    \qquad
    \norm{H_i(t)}\le c_H\norm{\delta(t)}+d_H,
    \label{eq:Hbound}
\end{equation}
where $c_H$ and $d_H$ are defined in \eqref{eq:cH}.
\feq{\label{eq:cH}
    c_H:=\alpha_A d^\ast\sigma,\qquad
    d_H:=\alpha_A d^\ast\bar x_0+\alpha_B d^\ast\bar u_{\mathrm{all}}.
}
Consequently, the observer perturbation satisfies \eqref{eq:phibound}.
\begin{equation}
    \abs{\varphi_i(t)}
    \le
    \abs{a_i}\abs{e_{i,2}}
    +\abs{b_i}\bar U
    +c_H\norm{\delta}
    +d_H .
    \label{eq:phibound}
\end{equation}
\end{lemma}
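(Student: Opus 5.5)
The three bounds are proved separately, in the order $U_i$, then $H_i$, then $\varphi_i$; each is a triangle inequality on the defining formula combined with Assumptions~\ref{ass:A1}--\ref{ass:Astanding}.

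\textbf{Bound on $U_i$.} Start from \eqref{eq:Ui_def} and apply the triangle inequality to get
\[
\abs{U_i}\le \sum_{j\in\mathcal N_i}a_{ij}\big(\abs{u_i}+\abs{u_j}\big)+a_{i0}\big(\abs{u_i}+\abs{u_0}\big).
\]
Lemma~\ref{lem:sat} gives $\abs{u_i}\le\bar u_i$ globally for $i\in[N]$, and Assumption~\ref{ass:Astanding} gives $\abs{u_0}\le\bar u_0$ a.e. Each input is therefore bounded by $\bar u_{\mathrm{all}}$, so $\abs{U_i}\le 2(d_i+a_{i0})\bar u_{\mathrm{all}}\le 2d^\ast\bar u_{\mathrm{all}}=\bar U$, holding a.e.\ in $t$.

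\textbf{Bound on $H_i$.} Apply the same triangle inequality to \eqref{eq:Hi_def}, using $\norm{A_i-A_j}\le\alpha_A$ and $\norm{B_i-B_j}\le\alpha_B$. This gives
\[
\norm{H_i}\le \alpha_A\Big(\sum_j a_{ij}\norm{x_j}+a_{i0}\norm{x_0}\Big)+\alpha_B d^\ast\bar u_{\mathrm{all}}.
\]
The $\alpha_B$ part is already in the required form. For the $\alpha_A$ part, write each follower state as $x_j=(x_j-x_0)+x_0$, so that $\norm{x_j}\le\norm{\tilde x}+\bar x_0$. Assumption~\ref{ass:A1} then lets me invert $\delta=((L+D_0)\otimes I_2)\tilde x$, which gives $\norm{\tilde x}\le\sigma\norm{\delta}$. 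Collecting terms with weights summing to at most $d^\ast$ yields
\[
\norm{H_i}\le\alpha_A d^\ast\sigma\norm{\delta}+\alpha_A d^\ast\bar x_0+\alpha_B d^\ast\bar u_{\mathrm{all}}=c_H\norm{\delta}+d_H .
\]
This is the main obstacle. The bound as stated carries the matrix mismatch on $x_j$ itself, not on $x_j-x_0$, so I have to be careful that the constant $\alpha_A d^\ast\bar x_0$ absorbs the $x_0$ part for both neighbor and leader terms. If necessary I will use the leader term $(A_i-A_0)x_0$ directly, since $\norm{x_0}\le\bar x_0$.

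\textbf{Bound on $\varphi_i$.} First write the observer error dynamics. Subtracting \eqref{eq:11} from \eqref{eq:P1} gives
\[
\dot e_i^\delta=A_ie_i^\delta+B_iU_i+H_i-\ell_i(e_{y,i}),\qquad e_{y,i}=e_{i,1}.
\]
In components this reads $\dot e_{i,1}=e_{i,2}-\kappa_{i,1}\abs{e_{i,1}}^{1/2}\operatorname{sgn}(e_{i,1})+H_{i,1}$ and $\dot e_{i,2}=a_ie_{i,2}+b_iU_i+H_{i,2}-\kappa_{i,2}\operatorname{sgn}(e_{i,1})$. The lumped super-twisting perturbation is therefore $\varphi_i=a_ie_{i,2}+b_iU_i+(\text{component of }H_i)$. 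The first component of $H_i$ vanishes because $A_i-A_j$ has zero first row and $B_i-B_j$ has zero first entry, so only the second component enters. Bounding each term with $\abs{H_{i,2}}\le\norm{H_i}$ and the two estimates above gives \eqref{eq:phibound}.
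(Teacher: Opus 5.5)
Your proposal is correct and follows the same route as the paper, which only cites the saturation bounds and the identity $\delta=((L+D_0)\otimes I_2)\tilde x$. Your explicit derivation of the error dynamics fills in details the paper leaves implicit: you show $H_{i,1}=0$, which puts the error system in super-twisting form with $\varphi_i=a_ie_{i,2}+b_iU_i+H_{i,2}$. The step you flag as the main obstacle is not one, and no fallback is needed: using $\norm{x_j}\le\sigma\norm{\delta}+\bar x_0$ gives $\alpha_A\bigl(d_i\sigma\norm{\delta}+(d_i+a_{i0})\bar x_0\bigr)\le c_H\norm{\delta}+\alpha_A d^\ast\bar x_0$ directly, since $d_i\le d_i+a_{i0}\le d^\ast$.
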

The proof follows from the boundedness of the saturated inputs and the identity 
$\delta=((L+D_0)\otimes I_2)\tilde x$.
\vspace{-0.15in}
\subsection{Composite Lyapunov Function}\label{sec:lyap}

Under Assumption~\ref{ass:selfcons}, the per-agent and composite super-twisting Lyapunov functions are defined in \eqref{eq:VeSTdef}.
\feq{
    V_{e,i}^{\mathrm{ST}}\;&:=\;\zeta_{i}^{T}P_{e,i}\zeta_{i},\quad \zeta_{i}&:=(\lfloor e_{i,1}\rceil^{1/2},e_{i,2})^{T},\\
    \qquad V_{e}^{\mathrm{ST}}\;&:=\;\sum_{i\in[N]}V_{e,i}^{\mathrm{ST}}.
    \label{eq:VeSTdef}
}
Set $\bar M := \bar V_*/\mu$, $\bar\delta := \sqrt{\bar V_*/\lambda_{\min}(P)}$,
$\lambda_{\min}^{P_e} := \min_i \lambda_{\min}(P_{e,i})$,
$\bar\zeta_1^{\,2} := \bar M / \lambda_{\min}^{P_e}$,
$\bar e_2 := \bar\zeta_1$, and $\bar U := 2 d^* \bar u_{\mathrm{all}}$.

From $\norm{e_{i}^{\delta}}^{2}=\abs{\zeta_{i,1}}^{4}+\zeta_{i,2}^{2}\le(\bar\zeta_{1}^{2}+1)\norm{\zeta_{i}}^{2}$ and summation over $i\in[N]$ gives \eqref{eq:ZE}.
\feq{
    \norm{e^{\delta}}\le c_{\zeta}(\bar M)\sqrt{V_{e}^{\mathrm{ST}}},\qquad c_{\zeta}(\bar M):=\sqrt{(\bar\zeta_{1}^{2}+1)/\lambda_{\min}^{P_{e}}}.
    \label{eq:ZE}
}
The constants $P_{e,i},\lambda_{i},\lambda_{\min}^{P_{e}},\lambda_{\min}^{\mathrm{ST}},c_{\zeta}(\bar M)$ are fixed once for $\bar M$ and held throughout, so the Lyapunov estimate~\eqref{eq:S3} below is valid on the entire set $\ccalV_{\bar V_{*}}$ with no continuity-extension across $\partial\ccalV_{*}$.
\vspace{-0.15in}
\subsection[Lyapunov Analysis on Vbar-star]{Lyapunov Analysis on $\ccalV_{\bar V_{*}}$}
\label{sec:steps}
Consider the composite Lyapunov function in \eqref{eq:Vmain}.
\begin{equation}
    V=\delta^{T}P\delta+\mu V_{e}^{\mathrm{ST}} .
    \label{eq:Vmain}
\end{equation}
Using~\eqref{eq:P3},~\eqref{eq:LMI1}, and Lemma~\ref{lem:Hi}, the consensus-error component satisfies \eqref{eq:S1}.
\begin{equation}
    \dot V_{\delta}
    \le
    -\bar\alpha_{1}\norm{\delta}^{2}
    +\beta_{1}\norm{\delta}\norm{e^{\delta}}
    +\gamma_{1}\norm{\delta},
    \label{eq:S1}
\end{equation}
where $\bar\alpha_{1}$, $\beta_{1}$ and $\gamma_{1}$ are defined in \eqref{eq:alpha_bar}--\eqref{eq:gamma_1}.
\feq{\label{eq:alpha_bar}
\bar\alpha_{1}:=\lambda_{\min}(Q)-2\norm{P}\sqrt{N}c_{H}}
\feq{\label{eq:beta_1}
\beta_{1}:=2\norm{P}\norm{\ccalB(L+D_{0})\ccalK},
}
\feq{\label{eq:gamma_1}
\gamma_{1}:=2\norm{P}\bigl(\sqrt{N}d_{H}+c_{R}\bigr).
}
Thus, the decay-margin condition is given by \eqref{eq:G1}.
\begin{equation}
    \lambda_{\min}(Q)>2\norm{P}\sqrt{N}c_{H}.
    \label{eq:G1}
\end{equation}
Equation~\eqref{eq:S2} follows from the LMI certificate
of~\cite{MorenoOsorio2012} at $\Delta = \widehat\Delta_i$,
$\abs{\varphi_i} \le \widehat\Delta_i$ from Lemma~\ref{lem:Hi} and
Assumption~\ref{ass:selfcons}, and Assumption~\ref{ass:STAac}.
\begin{equation}
    \dot V_{e}^{\mathrm{ST}}
    \le
    -\lambda_{\min}^{\mathrm{ST}}\sqrt{V_{e}^{\mathrm{ST}}}.
    \label{eq:S2}
\end{equation}

Combining~\eqref{eq:S1},~\eqref{eq:ZE}, and~\eqref{eq:S2}, and applying Young's inequality gives \eqref{eq:S3}.
\begin{equation}
    \dot V
    \le
    -\frac{\bar\alpha_{1}}{4}\norm{\delta}^{2}
    -\hat\alpha_{2}\sqrt{V_{e}^{\mathrm{ST}}}
    +\frac{\gamma_{1}^{2}}{\bar\alpha_{1}},
    \label{eq:S3}
\end{equation}
where $\hat\alpha_{2}$ is given by \eqref{eq:alpha_bar_2}.
\feq{\label{eq:alpha_bar_2}
\hat\alpha_{2}
:=
\mu\lambda_{\min}^{\mathrm{ST}}
-
\frac{(\beta_{1}c_{\zeta}(\bar M))^{2}\sqrt{\bar M}}{2\bar\alpha_{1}}>0 .
}
Equivalently, the design must satisfy \eqref{eq:G3}.
\begin{equation}
    \mu\lambda_{\min}^{\mathrm{ST}}
    >
    \frac{(\beta_{1}c_{\zeta}(\bar M))^{2}\sqrt{\bar M}}{2\bar\alpha_{1}}.
    \label{eq:G3}
\end{equation}
\vspace{-0.25in}
\subsection{Forward Invariance and Sublevel-Set Ultimate Boundedness}\label{sec:invar}

Define $W(z):=\tfrac{\bar\alpha_{1}}{4}\norm{\delta}^{2}+\hat\alpha_{2}\sqrt{V_{e}^{\mathrm{ST}}}-\tfrac{\gamma_{1}^{2}}{\bar\alpha_{1}}$ and the residual set $\ccalB_{\rho}^{V}:=\{W\le 0\}$, so that~\eqref{eq:S3} reads $\dot V\le -W$. The natural Lyapunov measure of the residual is the explicit constant defined in \eqref{eq:Vrho}.
\feq{
    V_{\rho}:=\lambda_{\max}(P)\frac{4\gamma_{1}^{2}}{\bar\alpha_{1}^{2}}+\mu\,\frac{\gamma_{1}^{4}}{\bar\alpha_{1}^{2}\hat\alpha_{2}^{2}}.
    \label{eq:Vrho}
}
On $\ccalB_{\rho}^{V}$, $\norm{\delta}^{2}\le 4\gamma_{1}^{2}/\bar\alpha_{1}^{2}$ and $V_{e}^{\mathrm{ST}}\le\gamma_{1}^{4}/(\bar\alpha_{1}^{2}\hat\alpha_{2}^{2})$; consequently $V\le\lambda_{\max}(P)\norm{\delta}^{2}+\mu V_{e}^{\mathrm{ST}}\le V_{\rho}$, so $\ccalB_{\rho}^{V}\subset\{V\le V_{\rho}\}$.

The following two regional invariance conditions are imposed. The first guarantees containment of the design envelope in the admissible set; the second ensures a strict gap between the residual sublevel and the certified envelope.

\smallskip
For $(\delta,e^{\delta})\in\ccalV_{\bar V_{*}}$, applying (B-$\delta$),~\eqref{eq:ZE}, and the triangle inequality yields \eqref{eq:hatdeltabnd}.
\feq{
    \norm{\hat{\delta}_{i}}=\norm{\delta_{i}-e_{i}^{\delta}}&\le\norm{\delta_{i}}+\norm{e_{i}^{\delta}}\le\norm{\delta}+\norm{e^{\delta}}\le\bar\delta \\
    &+c_{\zeta}(\bar M)\sqrt{\bar M},\qquad i\in[N].
    \label{eq:hatdeltabnd}
}
Therefore, $\ccalV_{\bar V_{*}}\subset\Omega$ provided \eqref{eq:INVa} holds.
\feq{
    \bar\delta+c_{\zeta}(\bar M)\sqrt{\bar M}\le r_{\Omega}.
    \label{eq:INVa}
}
The strict residual--certificate gap is imposed in \eqref{eq:INVb}.
\feq{
    V_{\rho}<V_{*}.
    \label{eq:INVb}
}

\begin{theorem}[Regional practical leader-follower consensus]\label{thm:1}
Suppose Assumptions~\ref{ass:A1}--\ref{ass:selfcons} hold, the matrices
$P, Q$ provided by Assumption~\ref{ass:Acl} satisfy the decay-margin
condition~\eqref{eq:G1}, and $z(0) \in \ccalV_*$.
Then $\ccalV_\ast$ is forward invariant, and the Lyapunov function satisfies the ultimate bound in \eqref{eq:limsupV}.
\feq{\label{eq:limsupV}
    \limsup_{t\to\infty}V(z(t))\le V_\rho .
}
The consensus error, observer error, and leader--follower tracking
error satisfy the bounds in \eqref{eq:limsup}--\eqref{eq:tracking}, respectively.
\begin{equation}
    \limsup_{t\to\infty}\norm{\delta(t)}
    \le
    \sqrt{V_\rho/\lambda_{\min}(P)},
    \label{eq:limsup}
\end{equation}
\feq{
\begin{aligned}
\limsup_{t\to\infty} \norm{e^\delta(t)}
&\le c_\zeta^\rho \sqrt{V_\rho/\mu},\\
c_\zeta^\rho
&:= \sqrt{
\frac{V_\rho+\mu\lambda_{\min}^{P_e}}
{\mu(\lambda_{\min}^{P_e})^2}
}.
\end{aligned}
\label{eq:limsupE}}
\begin{equation}
    \limsup_{t\to\infty}\norm{\tilde x(t)}
    \le
    \sigma\sqrt{V_\rho/\lambda_{\min}(P)} .
    \label{eq:tracking}
\end{equation}
\end{theorem}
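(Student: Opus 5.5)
The plan is a standard sublevel-set invariance argument built on the dissipation inequality \eqref{eq:S3}. Take $\ccalV_*=\{V\le V_*\}$ and $\ccalV_{\bar V_*}=\{V\le\bar V_*\}$. The first step is to check that every point of $\ccalV_{\bar V_*}$ satisfies the a priori bounds that were used to build \eqref{eq:S3}.

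On $\ccalV_{\bar V_*}$ we have $\lambda_{\min}(P)\norm{\delta}^2\le V\le\bar V_*$, which gives $\norm{\delta}\le\bar\delta$. We also have $\mu V_e^{\mathrm{ST}}\le\bar V_*$, which gives $V_{e,i}^{\mathrm{ST}}\le\bar M$. Hence $\norm{\zeta_i}^2\le\bar\zeta_1^2$, so $\abs{e_{i,2}}\le\bar e_2$ and $\abs{e_{i,1}}\le\bar\zeta_1^2$. By \eqref{eq:INVa} and \eqref{eq:hatdeltabnd}, $\ccalV_{\bar V_*}\subset\Omega$. Then Lemma~\ref{lem:sat} shows the saturation is inactive, so the closed loop is \eqref{eq:P3}. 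Lemma~\ref{lem:Hi} together with Assumption~\ref{ass:selfcons} gives $\abs{\varphi_i}\le\Delta_i(\bar M,\bar\delta)\le\widehat\Delta_i$. Consequently the Moreno--Osorio certificate applies, and \eqref{eq:S2} holds. Combining \eqref{eq:S1}, \eqref{eq:S2}, \eqref{eq:ZE}, the decay margin \eqref{eq:G1} and \eqref{eq:G3}, the inequality $\dot V\le -W(z)$ holds for Filippov solutions while they remain in $\ccalV_{\bar V_*}$. Absolute continuity of $V$ along solutions comes from Assumption~\ref{ass:STAac} for the observer part, and from standard Filippov theory for the smooth quadratic part $\delta^TP\delta$.

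The second step is forward invariance of $\ccalV_*$, which is where the circularity must be broken, and I expect it to be the main obstacle. The bounds above are valid only while the trajectory stays in $\ccalV_{\bar V_*}$, so I will use a first-exit-time argument. Suppose $z(0)\in\ccalV_*$ but $V(z(t_1))>V_*$ for some $t_1$. Let $t_0=\sup\{t<t_1:V(z(t))\le V_*\}$. By continuity, and since $V_*<\bar V_*$, there is an interval $[t_0,t_0+\epsilon)$ on which $V\in[V_*,\bar V_*]$ and $V>V_*$ immediately after $t_0$.

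On that interval the set $\{V\ge V_*\}$ is disjoint from $\ccalB_\rho^V$, because $\ccalB_\rho^V\subset\{V\le V_\rho\}$ and $V_\rho<V_*$ by \eqref{eq:INVb}. Hence $W>0$, and $\dot V<0$ a.e. Integrating, the absolute continuity of $V$ gives $V(z(t))\le V(z(t_0))=V_*$, which is a contradiction. The same argument shows that solutions cannot blow up, so they are defined on $[0,\infty)$.

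The third step gives the ultimate bound \eqref{eq:limsupV}. First I show that $W$ is uniformly positive on $\{V_\rho+\eta\le V\le V_*\}$ for each $\eta>0$. That set is compact in $(\delta,\zeta)$ and disjoint from $\ccalB_\rho^V$, and $W$ is continuous, so $W\ge w_\eta>0$ there. Thus $V$ decreases at rate at least $w_\eta$ until it enters $\{V\le V_\rho+\eta\}$. That set is itself invariant by the argument of the second step, since $\dot V<0$ on its boundary. Letting $\eta\to0$ yields $\limsup V\le V_\rho$.

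The remaining bounds follow from this. For \eqref{eq:limsup}, use $\lambda_{\min}(P)\norm{\delta}^2\le V$. For \eqref{eq:tracking}, use $\tilde x=((L+D_0)\otimes I_2)^{-1}\delta$ together with $\sigma$ from Assumption~\ref{ass:A1}. For \eqref{eq:limsupE}, note that asymptotically $V_e^{\mathrm{ST}}\le V_\rho/\mu$, so $\abs{e_{i,1}}=\abs{\zeta_{i,1}}^2\le V_\rho/(\mu\lambda_{\min}^{P_e})$. Redoing the estimate behind \eqref{eq:ZE} with this refined value in place of $\bar\zeta_1^2$ gives $\norm{e^\delta}^2\le\bigl(V_\rho/(\mu\lambda_{\min}^{P_e})+1\bigr)V_e^{\mathrm{ST}}/\lambda_{\min}^{P_e}$. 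Rearranging produces the constant $c_\zeta^\rho$.
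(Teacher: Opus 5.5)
Your proposal is correct and follows essentially the same route as the paper. You establish $\dot V\le -W$ on $\{V\le \bar V_*\}$ via \eqref{eq:INVa}, Lemma~\ref{lem:sat}, Lemma~\ref{lem:Hi} and Assumption~\ref{ass:selfcons}; you use $\mathcal{B}_\rho^V\subset\{V\le V_\rho\}$ with $V_\rho<V_*$ for invariance and the ultimate bound, where your first-exit-time and compact-shell arguments just spell out the paper's appeal to continuity of $W$ on a compact set; and you read off the error bounds from $V\ge\lambda_{\min}(P)\|\delta\|^2$, the refined form of \eqref{eq:ZE}, and $\sigma$, with the only tightening needed being to replace ``asymptotically $V_e^{\mathrm{ST}}\le V_\rho/\mu$'' by ``eventually $V_e^{\mathrm{ST}}\le (V_\rho+\epsilon)/\mu$'' and then let $\epsilon\to 0$.
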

\begin{proof}
Conditions~\eqref{eq:INVa} and~\eqref{eq:INVb} ensure
$\ccalV_{\bar V_*} \subset \Omega$ and $V_\rho < V_*$, so~\eqref{eq:P3}
holds on $\ccalV_{\bar V_*}$ and the residual sublevel
$\{V \le V_\rho\}$ lies strictly inside $\ccalV_*$. Since $W$ is uniformly continuous on the compact set $\mathcal V_{\bar V^*}$ and
$V(z(t))$ is absolutely continuous along Filippov solutions, inequality \eqref{eq:S3}
implies that $V$ decreases whenever the trajectory lies outside the residual
set $\{V\le V_\rho\}$. Since $V_\rho<V^*$, every solution starting in
$\mathcal V_{V^*}$ remains in $\mathcal V_{V^*}$ and satisfies
$\limsup_{t\to\infty}V(z(t))\le V_\rho$. The bound in \eqref{eq:limsup}
follows from \eqref{eq:limsupV} and $V \ge \lambda_{\min}(P)\norm{\delta}^2$. The observer-error bound in \eqref{eq:limsupE} follows from \eqref{eq:limsupV} and the relation between
$e^\delta$ and $V_e^{ST}$ in \eqref{eq:ZE}. Finally,
$\tilde x = ((L+D_0)\otimes \bbI_2)^{-1}\delta$ and Assumption~\ref{ass:A1} gives the tracking-error bound
in \eqref{eq:tracking}.
\end{proof}
\begin{remark}[Asymptotic feasibility]\label{rem:feasibility}
Define $V_\rho^\infty := 4\lambda_{\max}(P)\gamma_1^2/\bar\alpha_1^2$.
If the actuator-envelope condition
$V_\rho^\infty < \lambda_{\min}(P)\,r_\Omega^2$ holds, then
Assumption~\ref{ass:selfcons} is satisfied for $\mu$ sufficiently large,
via the parametric gain choice
$\kappa_{i,1} = k_1\sqrt{\widehat\Delta_i+1}$,
$\kappa_{i,2} = k_2(\widehat\Delta_i+1)$, with fixed $(k_1, k_2)$
satisfying $k_2 > 1$ and $k_1^2 > 4k_2(k_2+1)/(k_2-1)$.
\end{remark}
\vspace{-0.15in}
\section{Simulation and results}\label{seq:5}
This section presents simulation results demonstrating the performance and effectiveness of the proposed method. The simulation model includes a multi-building community, such as a university campus, consisting of six buildings in which TCLs serve as the primary controllable loads. In practice, load changes occur in a discrete manner; however, the simulation results are presented as continuous plots for ease of visualization and interpretation. {\color{blue}Also, the control input $u_i(t)$ is constrained by a maximum allowable range $\bar{u}_i$ to reflect finite TCL power-ramping capability}. Building B$_0$ serves as the facility management building, acting as the leader, which receives real-time DR signals from the utility. Using the communication graph shown in Fig.~\ref{fig:com}, B$_0$ broadcasts a power-curtailment trajectory to the five follower buildings\ B$_1$-B$_5$. The follower buildings, which may include classrooms, laboratories, and/or office spaces, are each characterized by distinct TCL parameters, comfort bands, thermal capacitance, and occupancy schedules, as shown in Table \ref{tab:tab1}.  

Three SCE-inspired DR scenarios \footnote{See SCE's ELRP and CPP program pages: \href{https://www.sce.com/business/demand-response/emergency-load-reduction-program}{ELRP}, \href{https://www.sce.com/business/rates/cpp}{CPP}.} are considered: baseline DR (\hyperref[sc:sc1]{Scenario A}), where the manager fully approves the 300 kW request; mid-course revision (\hyperref[sc:sc2]{Scenario B}), where the target is reduced from 300 kW to 100 kW after 120 s; and delayed commitment (\hyperref[sc:sc3]{Scenario C}), where the manager initially commits to 300 kW and later increases curtailment to 500 kW.
\begin{figure} 
    \centering
    \includegraphics[width=0.80\linewidth]{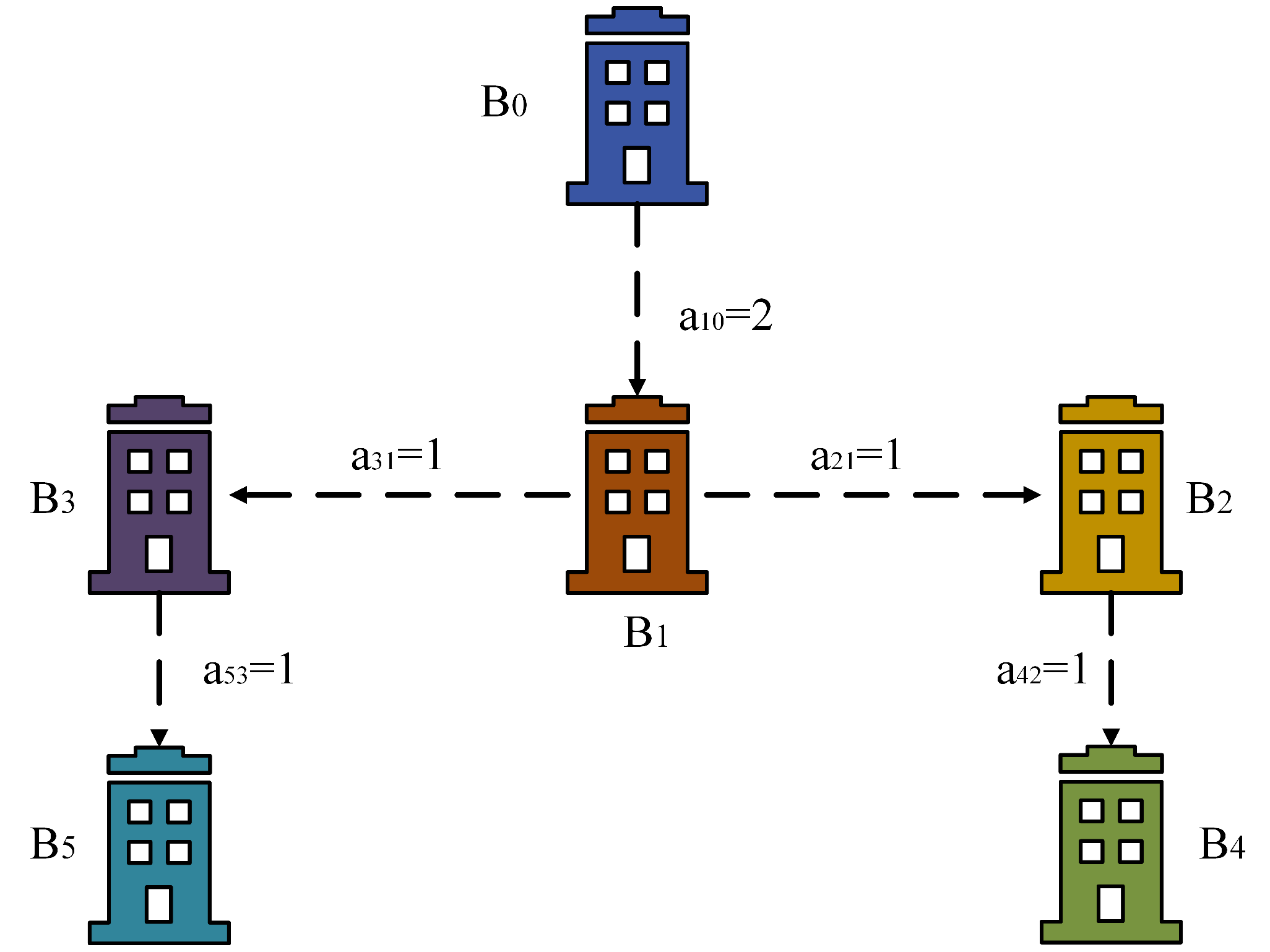} 
    \caption{Communication graph}
    \label{fig:com}
\end{figure}


\sisetup{
  round-mode          = places,
  round-precision     = 2,
  table-number-alignment = center,
  table-format        = 3.2 
}
\newcolumntype{Y}{>{\centering\arraybackslash}X}

\begin{table}[!t]
  \caption{Single-AC parameters for the six buildings}
  \label{tab:tab1}
  \centering
  \scriptsize                      
  \setlength{\tabcolsep}{3pt}      
  \renewcommand{\arraystretch}{1.1} 

  \begin{tabularx}{\columnwidth}{@{}l *{6}{Y}@{}}
    \toprule
    \textbf{Parameter} & \textbf{B0} & \textbf{B1} & \textbf{B2} & \textbf{B3} & \textbf{B4} & \textbf{B5} \\ \midrule
    Performance coefficient $\eta$ & 2.50 & 2.50 & 2.50 & 2.50 & 2.50 & 2.50 \\
    Temperature tolerance $\Delta\theta$ [\si{\celsius}] & 2 & 3 & 1 & 2 & 3 & 1 \\
    Temperature setpoint $\theta^{\text{set}}$ [\si{\celsius}] & 25 & 23 & 24 & 25 & 24 & 23 \\
    Thermal resistance $R^{\text{th}}$ [\si{\celsius\per\kilo\watt}] & 1.50 & 1.25 & 1.25 & 1.00 & 1.00 & 1.40 \\
    Thermal capacitance $C^{\text{th}}$ [\si{\kilo\watt\hour\per\celsius}] & 10.0 & 15.0 & 12.5 & 7.5 & 10.0 & 11.0 \\
    Number of {\color{blue}TCL loads} $n_i$ & 100 & 80 & 100 & 70 & 80 & 100 \\
    Rated power per unit $P$ [\si{\kilo\watt}] & 5 & 5 & 5 & 5 & 5 & 5 \\
    \bottomrule
  \end{tabularx}
  \vspace{-0.15in}

\end{table}

\newcommand{\panelwidth}{0.95\linewidth}
\newcommand{\panelheight}{5.5cm}           
\begin{figure}[htbp]
    \centering
        \includegraphics[height=\panelheight]{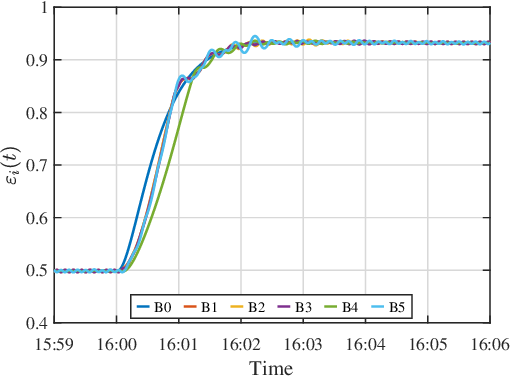}
    \hfill
    \caption{Comfort-state consensus among agents}\label{fig:baseline_comfort}
\end{figure}
\begin{figure}[!htbp]
    \centering
    \includegraphics[height=\panelheight]{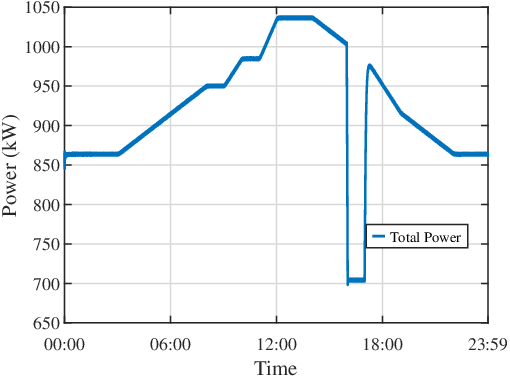}
    \hfill
    \caption{System load profile during the baseline scenario}\label{fig:baseline_power}
    \vspace{-0.15in}
\end{figure}
\vspace{-0.35in}
\subsection{Scenario A: Baseline DR}\label{sc:sc1}
In this scenario, the facility manager receives a request from the utility to reduce demand by 300 kW for one hour. The system responds by reducing controllable cooling loads across the network. Following the trajectory designed by the facility manager, the facility building, which serves as the leader, initiates  the transition to a new comfort state setpoint. The other buildings then reach a consensus on the new comfort state by following the leader's state, as demonstrated in Fig.~\ref{fig:baseline_comfort}, while Fig.~\ref{fig:baseline_power} confirms that the  300 kW requested power reduction is achieved. Fig.~\ref{fig:ob} shows bounded observer transients rapidly decay after the DR event starts. At the time of the DR event, the trajectory change imposed by the facility manager induces a temporary mismatch in the consensus error across agents, resulting in transient deviations in both $e_1^{\delta}(t)$ and $e_2^{\delta}(t)$, as shown in Figs.~\ref{fig:first_ob} and \ref{fig:second_ob}. This effect is more pronounced in $e_2^{\delta}(t)$, as derivative states are more sensitive to abrupt variations. Nevertheless, all transients remain bounded and decay rapidly, confirming stable and robust observer performance under sudden operating changes.

\begin{figure}[htbp]
    \centering

    \subcaptionbox{\label{fig:first_ob}}{%
        \includegraphics[height=\panelheight]{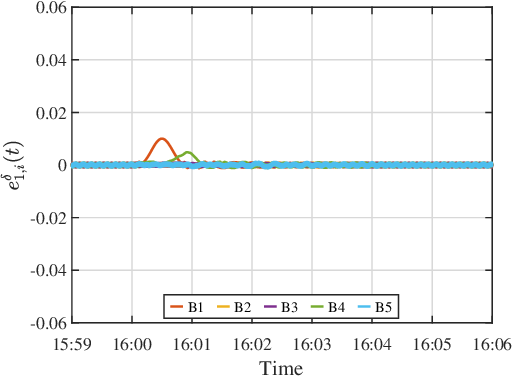}
    }


    \subcaptionbox{\label{fig:second_ob}}{%
        \includegraphics[height=\panelheight]{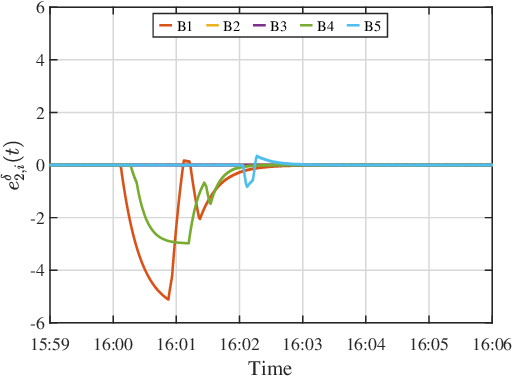}
    }

    \caption{Observer-error trajectories during the baseline DR event. (a) Estimation error of the comfort-state consensus error at the start of the DR event. (b) Estimation error of the time derivative of the comfort-state consensus error at the start of the DR event.}
    \label{fig:ob}
    \vspace{-0.2in}
\end{figure}
\vspace{-0.2in}
\subsection{Scenario B: Mid-course Revision\label{sc:sc2}}
\vspace{-2pt}
In Scenario B, the facility manager initially decides to shed 300 kW from the campus load. However, after 120 s, before reaching the target, the manager revises the curtailment to 100 kW. In Fig.~\ref{fig:dr_sudden}, dotted traces denote reference trajectories for the 300 kW load reduction. Fig.~\ref{fig:spb} plots the power trajectories of six buildings. Each load first moves toward the initial 300‑kW load reduction target before smoothly redirecting to the updated 100‑kW target. The corresponding indoor temperatures, shown in Fig.~\ref{fig:stemp}, rise modestly under the initial curtailment target while remaining within the acceptable comfort band, and then adapt to the lower target. Fig.~\ref{fig:scomf} confirms that all follower buildings quickly realign with the leader’s updated state, demonstrating robust consensus tracking control under abrupt changes in human control input. 

\vspace{-5pt}
\begin{figure}[!t]
  \centering

  \subcaptionbox{\label{fig:spb}}{%
    \includegraphics[height=\panelheight]{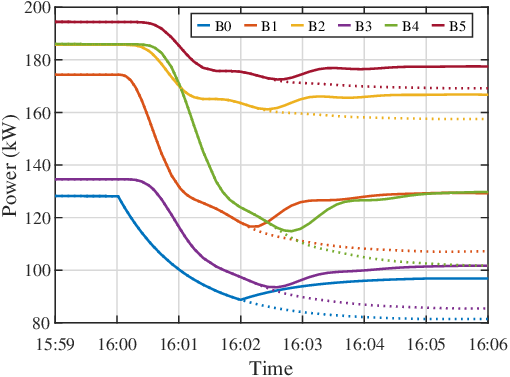}
  }

  \vspace{-2pt}

  \subcaptionbox{\label{fig:stemp}}{%
    \includegraphics[height=\panelheight]{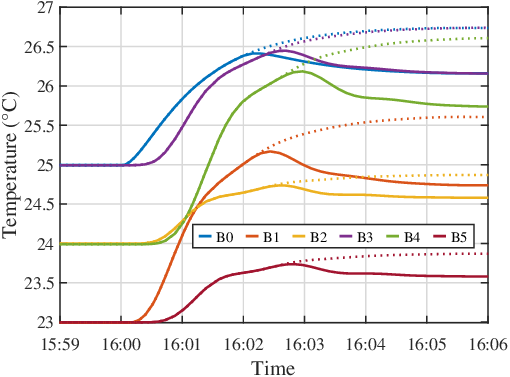}
  }

  \vspace{-2pt}

  \subcaptionbox{\label{fig:scomf}}{%
    \includegraphics[height=\panelheight]{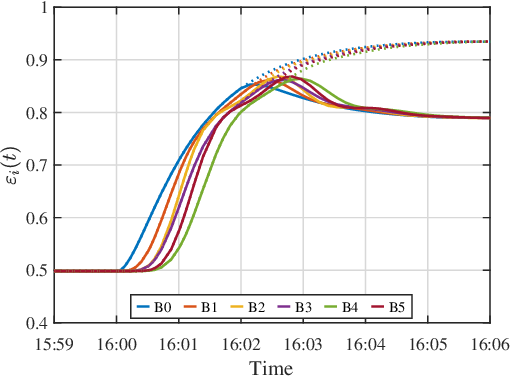}
  }

  \caption{Performance during the mid-course revision scenario. (a) Per-building load profiles. (b) Indoor temperature responses. (c) Occupant-comfort indices.}
  \label{fig:dr_sudden}
  \vspace{-0.25in}
\end{figure}
\vspace{-0.15in}
\subsection{Scenario C: Delayed Commitment}\label{sc:sc3}
\vspace{-0.05in}
Fig.~\ref{fig:ctp} shows a two-stage DR event. The utility requests a 500 kW curtailment from 16:00 to 18:00, but the facility manager initially commits to 300 kW, reducing campus load from about 1 MW to 700 kW. At 17:00, the manager applies the remaining 200 kW reduction, lowering demand to about 500 kW until the event ends, after which the load returns smoothly to its pre-event level. The control signal in Fig.~\ref{fig:ccl} shows three transitions: 300 kW of load shedding at 16:00, an additional 200 kW at 17:00, and DR termination at 18:00. Fig.~\ref{fig:ctl} shows that indoor temperatures exceed the nominal comfort range near the end of the event, illustrating the grid-support/comfort trade-off under aggressive curtailment. The traces also show close follower-leader alignment while satisfying the utility target.
\vspace{-5pt}
\begin{figure}[!t]
  \centering

  \subcaptionbox{\label{fig:ctp}}{%
    \includegraphics[height=\panelheight]{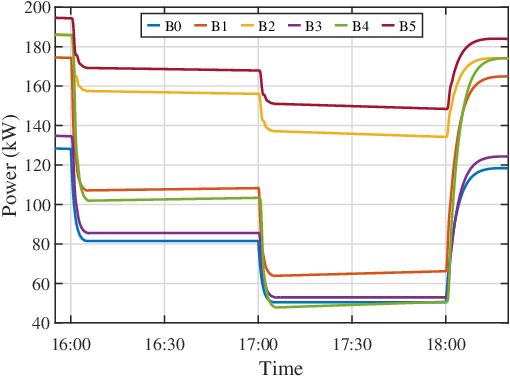}%
  }

  \vspace{-2pt}

  \subcaptionbox{\label{fig:ccl}}{%
    \includegraphics[height=\panelheight]{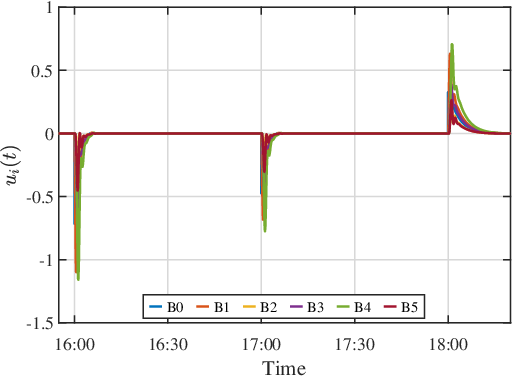}%
  }

  \vspace{-2pt}

  \subcaptionbox{\label{fig:ctl}}{%
    \includegraphics[height=\panelheight]{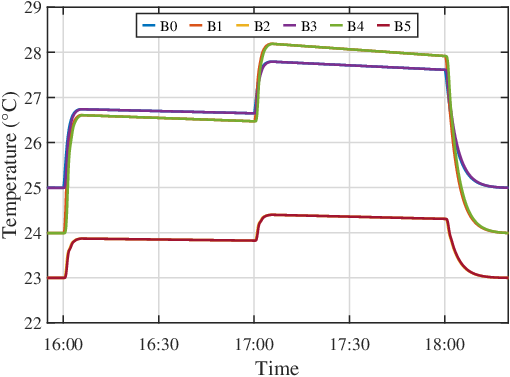}%
  }

  \caption{Performance during the two-hour delayed commitment scenario. (a) Two-stage power profile of the buildings. (b) Building control signals. (c) Indoor temperatures.}
  \label{fig:dr_window}
  \vspace{-0.15in}
\end{figure}
\vspace{-0.2pt}
\section{Conclusion}\label{seq:6}
This paper introduced a human-in-the-loop approach for controlling multi-building systems participating in DR programs. Modeling humans as non-autonomous leaders enables flexible DR participation. This flexibility enhances user comfort and system adaptability. However, integrating humans into a multi-agent control system adds complexity due to unpredictable human behavior and unknown inputs. These challenges, together with the violation of the observer-matching condition, further complicate the problem. By leveraging the observability of the system, sliding-mode observers are proposed as an effective solution for MASs with unknown inputs. The proposed leader-follower consensus control ensures system stability, while the human leader, acting as the facility manager, balances user thermal comfort and grid support.

The results of the simulation study validate the effectiveness of the proposed approach, demonstrating the system's ability to achieve consensus and maintain stability across various demand response scenarios. The results demonstrate that the proposed methodology increases the flexibility and adaptability of DR by seamlessly integrating human decision‑making into advanced control systems, thereby improving overall performance and user satisfaction. This study paves the way for further exploration of human-in-the-loop control strategies in other complex, dynamic power system applications.
\vspace{-3pt}

%



\vspace{-0.06in}
\section*{Acknowledgment}
\vspace{-4pt}
This work is supported in part by CEC grant EPC-16-077.
\vspace{-7pt}
\ifCLASSOPTIONcaptionsoff
  \newpage
\fi


\bibliographystyle{IEEEtran}
\bibliography{References}
\vspace{-0.25in}
%







\end{document}